\newtheorem{definition}{Definition}
\newtheorem{proposition}[definition]{Proposition}
\newtheorem{lemma}[definition]{Lemma}
\newtheorem{theorem}[definition]{Theorem}
\newtheorem{corollary}[definition]{Corollary}
\newtheorem{conjecture}[definition]{Conjecture}
\newtheorem{remark}[definition]{Remark}
\newtheorem{example}[definition]{Example}
\def\squareforqed{\hbox{\rlap{$\sqcap$}$\sqcup$}}
\def\qed{\ifmmode\squareforqed\else{\unskip\nobreak\hfil
\penalty50\hskip1em\null\nobreak\hfil\squareforqed
\parfillskip=0pt\finalhyphendemerits=0\endgraf}\fi}
\def\endenv{\ifmmode\;\else{\unskip\nobreak\hfil
\penalty50\hskip1em\null\nobreak\hfil\;
\parfillskip=0pt\finalhyphendemerits=0\endgraf}\fi}
\newenvironment{proof}{\noindent \textbf{{Proof.~} }}{\qed}
\def\bcj{\begin{conjecture}}
\def\ecj{\end{conjecture}}
\def\bcr{\begin{corollary}}
\def\ecr{\end{corollary}}
\def\bd{\begin{definition}}
\def\ed{\end{definition}}
\def\bea{\begin{eqnarray}}
\def\eea{\end{eqnarray}}
\def\bem{\begin{enumerate}}
\def\eem{\end{enumerate}}
\def\bex{\begin{example}}
\def\eex{\end{example}}
\def\bim{\begin{itemize}}
\def\eim{\end{itemize}}
\def\bl{\begin{lemma}}
\def\el{\end{lemma}}
\def\bpf{\begin{proof}}
\def\epf{\end{proof}}
\def\bpp{\begin{proposition}}
\def\epp{\end{proposition}}
\def\br{\begin{remark}}
\def\er{\end{remark}}
\def\bt{\begin{theorem}}
\def\et{\end{theorem}}
\def\r{\rho}
\def\s{\sigma}
\def\ph{\varphi}
\def\ps{\psi}
\def\Ps{\Psi}
\newcommand{\nc}{\newcommand}
\nc{\cA}{{\cal A}} \nc{\cB}{{\cal B}} \nc{\cC}{{\cal C}}
\nc{\cD}{{\cal D}} \nc{\cE}{{\cal E}} \nc{\cF}{{\cal F}}
\nc{\cG}{{\cal G}} \nc{\cH}{{\cal H}} \nc{\cI}{{\cal I}}
\nc{\cJ}{{\cal J}} \nc{\cK}{{\cal K}} \nc{\cL}{{\cal L}}
\nc{\cM}{{\cal M}} \nc{\cN}{{\cal N}} \nc{\cO}{{\cal O}}
\nc{\cP}{{\cal P}} \nc{\cR}{{\cal R}} \nc{\cS}{{\cal S}}
\nc{\cT}{{\cal T}} \nc{\cV}{{\cal V}} \nc{\cX}{{\cal X}}
\nc{\cZ}{{\cal Z}}
\def\Tr{\mathop{\rm Tr}\nolimits}
\def\max{\mathop{\rm max}}
\def\rank{\mathop{\rm rank}}
\def\rk{\mathop{\rm rk}}
\def\tr{\mathop{\rm Tr}}
\def\bigox{\bigotimes}
\def\ox{\otimes}
\def\ra{\rightarrow}
\newcommand{\bra}[1]{\langle#1|}
\newcommand{\ket}[1]{|#1\rangle}
\newcommand{\proj}[1]{| #1\rangle\!\langle #1 |}
\newcommand{\ketbra}[2]{|#1\rangle\!\langle#2|}
\newcommand{\cmp}{Comm. Math. Phys.}
\newcommand{\pla}{Phys. Lett. A}
\begin{document}
\title{Non-distillable entanglement guarantees distillable entanglement}

\author{Lin Chen}
\email{cqtcl@nus.edu.sg(Corresponding~Author)}
\address{Department of Pure Mathematics and Institute for Quantum Computing, University
of Waterloo, Waterloo, Ontario, N2L 3G1, Canada}
\address{Centre for Quantum Technologies, National
University of Singapore, 3 Science Drive 2, Singapore 117542} 

\author{Masahito Hayashi}
\email{hayashi@math.is.tohoku.ac.jp}
\address{Graduate School of Information Sciences, Tohoku
University, Aoba-ku, Sendai, 980-8579, Japan}
\address{Centre for Quantum Technologies, National University of Singapore, 3 Science Drive 2, 117542, Singapore}

\begin{abstract}
The monogamy of entanglement is one of the basic quantum mechanical
features, which says that when two partners Alice and Bob are more
entangled then either of them has to be less entangled with the
third party. Here we qualitatively present the converse monogamy of
entanglement: given a tripartite pure system and when Alice and Bob
are entangled and non-distillable, then either of them is
distillable with the third party. Our result leads to the
classification of tripartite pure states based on bipartite reduced
density operators, which is a novel and effective way to this
long-standing problem compared to the means by stochastic local
operations and classical communications. Furthermore we
systematically indicate the structure of the classified states and
generate them. We also extend our results to multipartite states.

\end{abstract}

\date{ \today }

\pacs{03.65.Ud, 03.67.Mn, 03.67.-a}



\maketitle
\section{Introduction}

The monogamy of entanglement is a purely quantum phenomenon in
physics \cite{ckw00} and has been used in various applications, such
as bell inequalities \cite{sg01} and quantum security \cite{KW}. In
general, it indicates that the more entangled the composite system
of two partners Alice $(A)$ and Bob $(B)$ is, the less entanglement
between $A$ $(B)$ and the environment $E$ there is. The security of
many quantum secret protocols can be guaranteed quantitatively
\cite{KW,TH}. However the converse statement generally doesn't hold,
namely when $A$ and $B$ are less entangled, we cannot decide whether
$A$ $(B)$ and $E$ are more entangled. In fact even when the formers
are classically correlated namely separable \cite{werner89}, the
latters may be also separable. For example, this is realizable by
the tripartite Greenberger-Horne-Zeilinger (GHZ) state.

Nevertheless, it is still important to {\em qualitatively}
characterize the above converse statement in the light of the
hierarchy of entanglement of bipartite systems. Such a
characterization defines a converse monogamy of entanglement, and
there is no classical counterpart. Besides, it is also expected to
be helpful for treating a quantum multi-party protocol when the
third party helps the remaining two parties, for it guarantees the
property of one reduced density operator from another. To justify
the hierarchy of entanglement, we recall six well-known conditions,
i.e., the separability, positive-partial-transpose (PPT)
\cite{hhh96,HHH}, non-distillability of entanglement under local
operations and classical communications (LOCC) \cite{HHH}, reduction
property (states satisfying reduction criterion) \cite{HH},
majorization property \cite{Hiro} and negativity of conditional
entropy \cite{HH}. These conditions form a hierarchy since a
bipartite state satisfying the former condition will satisfy the
latter too. Therefore, the strength of entanglement in the states
satisfying the conditions in turn becomes gradually {\em weak}.

For example, the hierarchy is closely related to the distilability
of entanglement \cite{HHH}. While PPT entangled states cannot be
distilled to Bell states for implementing quantum information tasks,
Horodecki's protocol can distill a state that violates reduction
criterion \cite{HH}. That is, the former entangled state is useless
as a resource while the latter entangled state is useful. So the
usefulness of entangled states can be characterized by this
hierarchy. Recently, a hierarchy of entanglement has been developed
based on these criteria \cite{hc11}.

\begin{figure}
  \includegraphics[width=3cm]{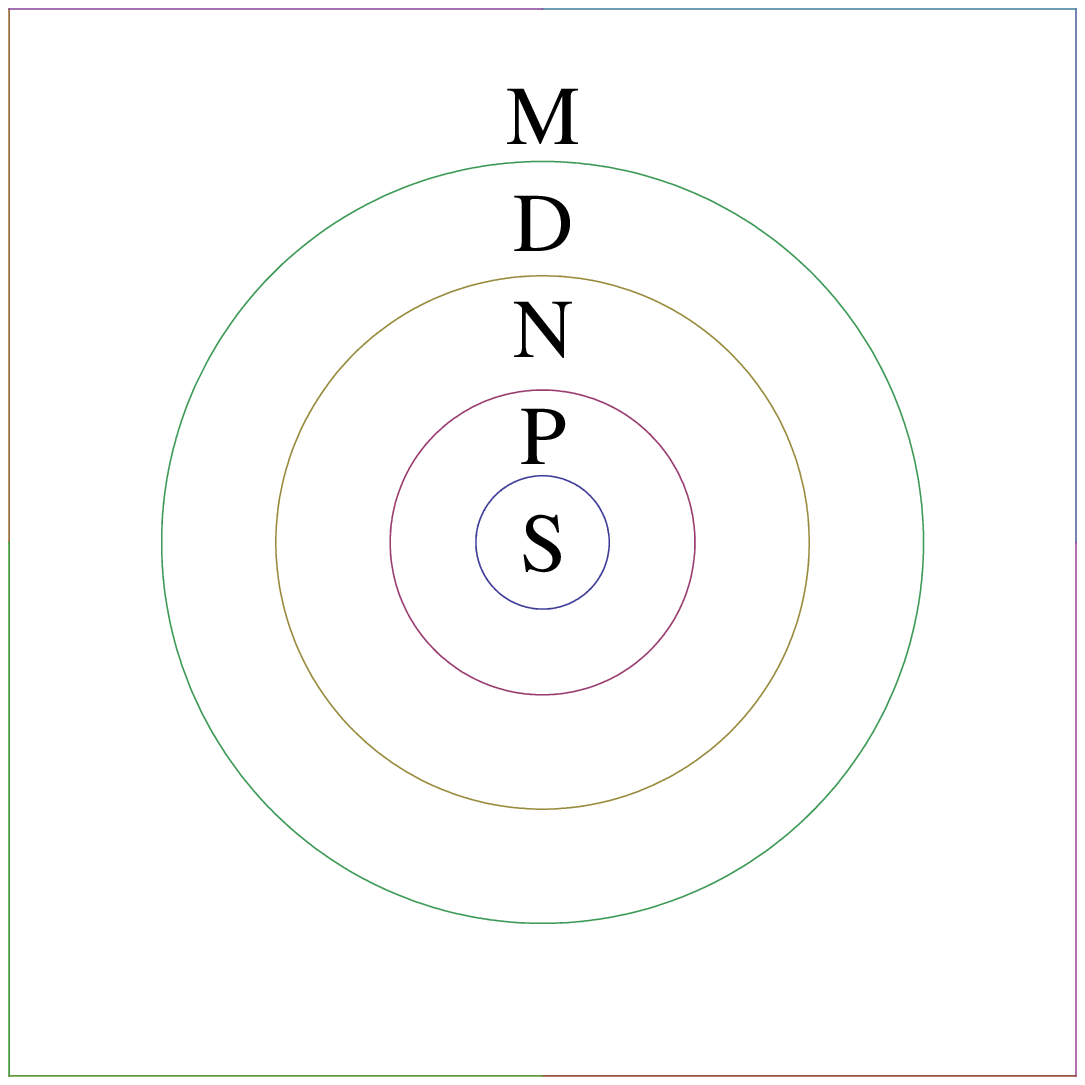}\\
  \caption{\label{fig:SPNDM} Hierarchy of bipartite states in terms of five sets S, P, N, D, M.
  Intuitively, the sets S and P form all
  PPT states, the sets S, P, N, D and M form all states satisfying and violating
  the reduction criterion, respectively. So the five sets
  constitute the set of bipartite states and there is no intersection between any two sets. The strength of
  entanglement of the five sets becomes weak in turn, $S \le P \le N \le D  \le M$.}
\end{figure}

In this paper, for simplicity we consider four most important
conditions, namely the separability, PPT, non-distillability and the
reduction criteria. Then we establish a further hierarchy of
entanglement consisting of five sets: separable states (S),
non-separable PPT states (P), non-PPT non-distillable states (N),
distillable reduction states (D), and non-reduction states (M), see
Figure \ref{fig:SPNDM}. In particular the states belonging to $M$
are always distillable \cite{HH}.

We show that when the entangled state between $A$ and $B$, i.e.,
$\r_{AB}$ belongs to the set $D$, then the state between $A$ $(B)$
and $E$, i.e., $\r_{AE} (\r_{BE})$ belongs to the set $D$ or $M$.
Likewise when $\r_{AB}$ belongs to $P$ or $N$, then $\r_{AE}
(\r_{BE})$ must belong to $M$. Hence we can qualitatively
characterize the converse monogamy of entanglement as follows: when
the state $\r_{AB}$ is weakly entangled, then $\r_{AE}$ is generally
strongly entangled in terms of the five sets $S,P,N,D,M$. These
assertions follow from a corollary of Theorem
\ref{thm:sixconditions} to be proved later.
\begin{theorem}
\label{thm1} Suppose a tripartite pure state has a non-distillable
bipartite reduced state. Then another bipartite reduced state is
separable if and only if it satisfies the reduction criterion.
\end{theorem}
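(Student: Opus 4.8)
The ``only if'' direction is not special to reduced states of a pure tripartite system: \emph{every} separable bipartite state satisfies the reduction criterion \cite{HH}. So the whole content lies in the ``if'' direction, which I would attack in contrapositive form: for a pure $|\psi\rangle_{ABE}$ one of whose bipartite marginals, say $\rho_{AB}$, is non-distillable, \emph{any} entangled marginal among $\rho_{AE},\rho_{BE}$ must \emph{violate} the reduction criterion. It suffices to treat $\rho_{AE}$, since the statement for $\rho_{BE}$ is the same after the relabelling $A\leftrightarrow B$ (which fixes $\rho_{AB}$).

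The first ingredient is the classical fact that a state violating the reduction criterion is one-copy distillable by Horodecki's local filtering, so a non-distillable state must \emph{satisfy} the reduction criterion; in particular $\rho_{AB}$ does. I emphasise that this alone is not enough: a distillable $\rho_{AB}$ in the class $D$ of Figure~\ref{fig:SPNDM} satisfies reduction while admitting an entangled $\rho_{AE}$ that also satisfies reduction, so the argument must exploit the full hypothesis that $\rho_{AB}$ is \emph{non}-distillable.

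My plan is to obtain Theorem~\ref{thm1} as the announced corollary of the master classification (the forthcoming Theorem~\ref{thm:sixconditions}), whose relevant output is: for a pure tripartite state, $\rho_{AB}\in P\cup N$ forces $\rho_{AE}\in M$, while $\rho_{AB}\in D$ forces $\rho_{AE}\in D\cup M$ (and likewise under permutations of the three parties). Granting this, Theorem~\ref{thm1} is a short set-chase: if $\rho_{AB}$ is non-distillable, i.e.\ $\rho_{AB}\in S\cup P\cup N$, then $\rho_{AE}\notin P\cup N$ (else $\rho_{AB}\in M$) and $\rho_{AE}\notin D$ (else $\rho_{AB}\in D\cup M$), hence $\rho_{AE}\in S\cup M$; since $\rho_{AE}$ satisfies reduction it is not in $M$, so $\rho_{AE}\in S$. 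For the master implications I would pass to the complementary-channel picture: write $\rho_{AB}=(\mathrm{id}\otimes\mathcal N)(\proj\Phi)$, $\rho_{AE}=(\mathrm{id}\otimes\mathcal N^{c})(\proj\Phi)$ with $\mathcal N,\mathcal N^{c}$ complementary channels on a reference of dimension $\rank\rho_A$ and $|\Phi\rangle$ a full-Schmidt-rank purification of $\rho_A$. Then for $\rho_{AB}$ (equivalently $\mathcal N$) and $\rho_{AE}$ (equivalently $\mathcal N^{c}$), ``separable'', ``distillable'' and ``reduction'' become ``entanglement-breaking'', ``Choi state distillable'' and ``completely-positive domination by a replacement map'', and one converts the hypotheses on $\mathcal N$ into structural normal forms for $|\psi\rangle_{ABE}$ and traces them onto $\mathcal N^{c}$. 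In the sub-case in which one is reduced to $\rho_{AB}$ separable, the normal form is $|\psi\rangle=\sum_i\sqrt{q_i}\,|a_i\rangle_A|b_i\rangle_B|e_i\rangle_E$ with orthonormal $\{|e_i\rangle\}$, whence $\rho_{AE}=\sum_{ij}\sqrt{q_iq_j}\,\langle b_j|b_i\rangle\,|a_i\rangle\!\langle a_j|\otimes|e_i\rangle\!\langle e_j|$, and separability of $\rho_{AE}$ should drop out of the reduction inequality for this operator, using the Schur product theorem on the Gram matrix of the $|b_i\rangle$.

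The main obstacle is the implication ``$\rho_{AB}$ entangled and non-distillable $\Rightarrow$ $\rho_{AE}$ violates reduction'' --- the quantitative heart of the paper, for which I expect no short argument: it requires the full structural dictionary of Theorem~\ref{thm:sixconditions}, and in particular a treatment of a possible NPT non-distillable $\rho_{AB}$. By contrast the residual step (separable $\rho_{AB}$ together with the reduction criterion on $\rho_{AE}$ forces $\rho_{AE}$ separable) I expect to be routine, being the positivity computation with the Gram-matrix normal form indicated above.
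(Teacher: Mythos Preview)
Your high-level plan---reduce Theorem~\ref{thm1} to Theorem~\ref{thm:sixconditions}---is exactly what the paper does, but the set-chase through $S,P,N,D,M$ is a detour: Theorem~\ref{thm:sixconditions} already \emph{is} the statement $(1)\Leftrightarrow(4)$ for the second marginal, so Theorem~\ref{thm1} is a literal specialisation, not a corollary via Theorem~\ref{thm:conversemonogamy}.

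The genuine gap is that you have the difficulty inverted. What you call the ``main obstacle'' (non-distillable entangled $\rho_{AB}$ forces $\rho_{AE}$ to violate reduction, including a putative NPT non-distillable $\rho_{AB}$) is in the paper a three-line rank argument, not a ``full structural dictionary''. Reduction on $\rho_{AE}$ gives majorization $\rho_E\succ\rho_{AE}$; non-distillability of $\rho_{AB}$ gives $\rho_B\succ\rho_{AB}$; purity gives $\rho_{AE}\sim\rho_B$ and $\rho_{AB}\sim\rho_E$; combining, $\rho_E\sim\rho_{AE}$, hence $\rank\rho_{AB}=d_E=d_B$. Now $\rho_{AB}$ is a non-distillable $d_A\times d_B$ state of rank $d_B$, and Lemma~\ref{le:rankAB=rankA,B} (via \cite{cd11,hlv00}) forces it separable. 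The NPT non-distillable case simply cannot occur under this rank constraint, so your worry there dissolves. Conversely, what you call the ``routine residual step'' is where the paper actually works: from separable $\rho_{AB}$ and the entropy equality $H(\rho_E)=H(\rho_{AE})$ (condition {\bf(6')}, implied by reduction through majorization), the paper invokes Petz's equality condition for monotonicity of relative entropy to build a recovery channel whose Stinespring extension exhibits an explicit separable decomposition of $\rho_{AE}$ (Lemma~\ref{le:sep}). Your Schur-product/Gram-matrix proposal is a different mechanism and, as stated, is not sufficient: the $|a_i\rangle$ in your normal form need not be orthonormal, so $\rho_{AE}$ is not maximally correlated, and the reduction inequality does not reduce to a Hadamard-product positivity that yields separability without further input. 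If you want to avoid Petz you would at minimum need to feed in the rank equalities from the first step to pin down the separable decomposition of $\rho_{AB}$ to exactly $d_E$ terms with linearly independent factors, and even then the argument is not the one-line positivity you anticipate.
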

From this theorem, we will solve two conjectures on the existence of
specified tripartite state proposed by Thapliyal in 1999
\cite{thapliyal99}. On the other hand, the theorem also helps
develop the classification of tripartite pure states based on the
three reduced states, each of which could be in one of the five
cases $S,P,N,D,M$. So there are at most $5^3=125$ different kinds of
tripartite states. Evidently, some of them do not exist due to
Theorem \ref{thm1}. It manifests that the quantum behavior of a
global system is strongly restricted by local systems. By
generalizing to many-body systems, we can realize the quantum nature
on macroscopic size in terms of the microscopic physical systems.
This is helpful to the development of matter and material physics
\cite{hw11}. Hence, in theory it becomes important to totally
identify different tripartite states.

To explore the problem, we describe the properties for reduced
states $\rho_{AB}$, $\rho_{BC}$, and $\rho_{CA}$ of the state
$|\Psi\rangle$ by $X_{AB},X_{BC},X_{CA}$ that take values in
$S,P,N,D,M$. The subset of such states $|\Psi\rangle$ is denoted by
$\cS_{X_{AB}X_{BC}X_{CA}}$, and the subset is non-empty when there
exists a tripartite state in it. For example, the GHZ state belongs
to the subset $\cS_{SSS}$. Furthermore as is later shown in Table
\ref{tab:tripartite}, $|\Psi\rangle$ belongs to the subset
$\cS_{SSM}$ when the reduced state $\rho_{CA}$ is an entangled
maximally correlated state \cite{rains99}. By Theorem \ref{thm1},
one can readily see that any non-empty subset is limited in nine
essential subsets,
$$\cS_{SSS},\cS_{SSM},\cS_{SMM},\cS_{PMM},\cS_{NMM},$$
and
$$\cS_{DDD},\cS_{DDM},\cS_{DMM},\cS_{MMM}$$.
Hence up to permutation, the number of non-empty subsets for
tripartite pure states is at most $21=1\times 3+3\times 6$. In
particular, it is a long-standing open problem that whether
$\cS_{NMM}$ exists \cite{dss00}. Except tho subsets generated from
$\cS_{NMM}$, we will demonstrate that the rest 18 subsets are indeed
non-empty by explicit examples. These subsets are not preserved
under the conventional classification by the invertible stochastic
LOCC (SLOCC) \cite{dvc00,cds08,ccd10}. We will explain these results
in Sec. III.

Furthermore, we show that the subsets form a commutative monoid and
it is a basic algebraic concept. We systematically characterize the
relation of the subsets by generating them under the rule of monoid
in the late part of Sec. III.

We also extend our results to multipartite scenario. In particular,
we introduce the multipartite separable, PPT and non-distillable
states. They become pairwise equivalent when they are compatible to
a pure state, see Theorem \ref{thm:ppt,n}, Sec. IV. Finally we
conclude in Sec. V.


\section{Unification of entanglement criterion}

In quantum information, the following six criteria are extensively
useful for studying bipartite states $\r_{AB}$ in the space $\cH_A
\ox \cH_B$.
\begin{description}
\item[(1)]
Separability: $\rho_{AB}$ is the convex sum of product states
\cite{werner89}.

\item[(2)]
PPT condition: the partial transpose of $\rho_{AB}$ is semidefinite
positive \cite{hhh96}.

\item[(3)]
Non-distillability: no pure entanglement can be asymptotically
extracted from $\rho_{AB}$ under LOCC, no matter how many copies are
available \cite{HHH}.

\item[(4)]
Reduction criterion: $\rho_{A}\otimes I_B \ge \rho_{AB}$ and $I_A
\otimes \rho_{B} \ge \rho_{AB}$ \cite{HH}.

\item[(5)]
Majorization criterion: $\rho_{A} \succ \rho_{AB}$ and $\rho_{B}
\succ \rho_{AB}$ \cite{Hiro}.

\item[(6)]
Conditional entropy criterion: $H_{\rho}(B|A)=  H(\rho_{AB})-
H(\rho_{A}) \ge 0$ and $H_{\rho}(A|B)=  H(\rho_{AB})- H(\rho_{B})
\ge 0$, where $H$ is the von Neumann entropy.
\end{description}
It is well-known that the relation $ {\bf (1)} \Rightarrow {\bf (2)}
\Rightarrow {\bf (3)} \Rightarrow {\bf (4)} \Rightarrow {\bf (5)}
\Rightarrow {\bf (6)}$ holds for any state $\r_{AB}$
\cite{HHH,HH,Hiro}. In particular apart from ${\bf (2)} \Rightarrow
{\bf (3)}$ whose converse is a famous open problem \cite{dss00}, all
other relations are strict. We will show that these conditions
become equal when we further require $\rho_{BC}$ is non-distillable.
First, under this restriction the conditions {\bf (5)} and {\bf (6)}
are respectively simplified into {\bf(5')} $\rho_{A} \sim
\rho_{AB}$, where $\sim$ denotes that $\rho_{A}$ and $\rho_{AB}$
have identical eigenvalues, and {\bf(6')} $H(\rho_{A}) =
H(\rho_{AB})$. Second, when $\rho_{BC}$ is non-distillable, since
$\rho_{AB} \succ \rho_{A}$ holds, the above two conditions {\bf
(5')} and {\bf (6')} are equivalent. Now we have

\begin{theorem}
\label{thm:sixconditions}
 For a tripartite state $\ket{\Ps}_{ABC}$ with
 a non-distillable reduced state $\rho_{BC}$ namely condition {\bf (3)}, then conditions {\bf (1)}-{\bf (6)},
 {\bf (5')}, and {\bf (6')} are equivalent for $\rho_{AB}$.
\end{theorem}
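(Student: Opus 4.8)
The plan is to close the chain ${\bf (1)}\Rightarrow{\bf (2)}\Rightarrow{\bf (3)}\Rightarrow{\bf (4)}\Rightarrow{\bf (5)}\Rightarrow{\bf (6)}$, which holds for every bipartite state, by proving ${\bf (6)}\Rightarrow{\bf (1)}$ for $\rho_{AB}$ under the standing hypothesis that $\rho_{BC}$ is non-distillable; the statements ${\bf (5')}$ and ${\bf (6')}$ then enter the equivalence class for free. The only inputs from purity of $\ket{\Ps}_{ABC}$ that I use are the spectral identities $\rho_{BC}\sim\rho_A$, $\rho_{AB}\sim\rho_C$, $\rho_{AC}\sim\rho_B$ (equal nonzero spectra) and the Schmidt form $\ket{\Ps}=\sum_i\sqrt{\mu_i}\,\ket{i}_A\ket{\phi_i}_{BC}$.

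First, the easy half. Non-distillability of $\rho_{BC}$ implies the reduction criterion for $\rho_{BC}$, hence the majorization criterion $\rho_B\succ\rho_{BC}$ and $\rho_C\succ\rho_{BC}$. Reading $\rho_C\succ\rho_{BC}$ at the level of eigenvalue vectors through $\rho_C\sim\rho_{AB}$ and $\rho_{BC}\sim\rho_A$ gives $\rho_{AB}\succ\rho_A$, hence $H(\rho_{AB})\le H(\rho_A)$ by Schur concavity; reading $\rho_B\succ\rho_{BC}$ the same way gives $\rho_B\succ\rho_A$. If now $\rho_{AB}$ satisfies ${\bf (6)}$ then $H(\rho_{AB})\ge H(\rho_A)$, so $H(\rho_{AB})=H(\rho_A)$, and together with $\rho_{AB}\succ\rho_A$ the strict Schur concavity of the von Neumann entropy forces $\rho_{AB}\sim\rho_A$, i.e.\ ${\bf (5')}$, hence a fortiori ${\bf (6')}$. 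Conversely ${\bf (5')}$ yields $\rho_A\succ\rho_{AB}$ trivially and $\rho_B\succ\rho_{AB}$ via $\rho_B\succ\rho_A\sim\rho_{AB}$, so ${\bf (5)}$ holds. Thus, under the hypothesis, ${\bf (5)}$, ${\bf (5')}$, ${\bf (6)}$ and ${\bf (6')}$ collapse to the single condition $\rho_{AB}\sim\rho_A$, and the whole theorem reduces to deriving ${\bf (1)}$ from $\rho_{AB}\sim\rho_A$.

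The crux, then, is: $\rho_{AB}\sim\rho_A$ together with $\rho_{BC}$ non-distillable implies $\rho_{AB}$ separable; equivalently, by purity, $H(\rho_{BC})=H(\rho_C)$ and $\rho_{BC}$ non-distillable force $\rho_{AB}$ to be a convex combination of product states. I would proceed by: (i) using $\rho_{AB}\sim\rho_A$ to get the rank identity $\rk\rho_{AB}=\rk\rho_A=:r$, so that $\ket{\Ps}=\sum_{i=1}^r\sqrt{\mu_i}\,\ket{i}_A\ket{\phi_i}_{BC}$ and $\rho_{AB}=\sum_{i,j}\sqrt{\mu_i\mu_j}\,\ketbra{i}{j}_A\otimes T_{ij}$ with $T_{ij}=\Tr_C\ketbra{\phi_i}{\phi_j}$ and $\sum_i\mu_i T_{ii}=\rho_B$; (ii) sandwiching the reduction-criterion operator inequalities for $\rho_{BC}$, namely $\rho_B\otimes I_C\ge\rho_{BC}$ and $I_B\otimes\rho_C\ge\rho_{BC}$, between the orthonormal $\ket{\phi_i}$ to constrain the blocks $T_{ij}$; (iii) concluding that $\cH_A$ (equivalently $\cH_B$) splits into sectors on which $\ket{\Ps}$ is locally a product, so that $\rho_{AB}$ is ``classical'' on one side and hence separable.

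I expect step (iii) to be the main obstacle. Concretely, it amounts to excluding that $\rho_{AB}$ is PPT-entangled or NPT-but-non-distillable (the sets $P$ and $N$ of Figure~\ref{fig:SPNDM}); the spectral and reduction-criterion data collected above are insensitive to this distinction, so here I anticipate needing the genuine distillation content of the hypothesis---showing that any such $\rho_{AB}$ would, via Horodecki's distillation protocol \cite{HH} applied to a suitable state built from the purification $\ket{\Ps}$, render $\rho_{BC}$ distillable, contradicting ${\bf (3)}$. Once ${\bf (6)}\Rightarrow{\bf (1)}$ is in place, the cycle ${\bf (1)}\Rightarrow\cdots\Rightarrow{\bf (6)}\Rightarrow{\bf (1)}$ closes the equivalence of ${\bf (1)}$--${\bf (6)}$, and the collapse from the easy half adjoins ${\bf (5')}$ and ${\bf (6')}$.
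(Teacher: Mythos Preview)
Your ``easy half'' is correct and matches the paper's pre-theorem discussion: under non-distillability of $\rho_{BC}$ one has $\rho_{AB}\succ\rho_A$, and then strict Schur concavity of entropy collapses {\bf (5)}, {\bf (5')}, {\bf (6)}, {\bf (6')} to the single condition $\rho_{AB}\sim\rho_A$. So the reduction to ``$\rho_{AB}\sim\rho_A$ plus $\rho_{BC}$ non-distillable implies $\rho_{AB}$ separable'' is exactly right.

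The gap is your step (iii), and it is a genuine one: block-sandwiching the reduction inequalities for $\rho_{BC}$ between the Schmidt vectors $\ket{\phi_i}$ does not distinguish separable $\rho_{AB}$ from PPT-entangled $\rho_{AB}$, and your fallback plan (feed a hypothetical entangled $\rho_{AB}$ into Horodecki's distillation protocol to contradict non-distillability of $\rho_{BC}$) is not a working argument---Horodecki's protocol distills states violating the reduction criterion, whereas the $\rho_{AB}$ you need to rule out may well \emph{satisfy} it. The paper's proof takes a quite different route and uses two ingredients you do not mention. First, it turns the distillation hypothesis against $\rho_{BC}$ rather than $\rho_{AB}$: from $\rho_{AB}\sim\rho_A$ one reads off $\rank\rho_{BC}=d_A=d_C$, and an $M\times N$ non-distillable state of rank $N$ is automatically separable (this is Lemma~\ref{le:rankAB=rankA,B}, resting on \cite{hlv00,cd11}). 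Second, having upgraded $\rho_{BC}$ to separable, it invokes Petz's equality condition for the monotonicity of relative entropy (Lemma~\ref{le:sep}): a separable decomposition $\rho_{BC}=\sum_i p_i\proj{\phi_i^B,\phi_i^C}$ lifts, via a recovery channel on $C$, to a purification from which one reads off directly that $\rho_{AB}$ is separable. Neither step is visible from your outline; the first is where the distillation content is actually spent, and the second is the structural tool that replaces your hoped-for sector decomposition.
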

The proof is given in the appendix of this paper. We can readily get
Theorem \ref{thm1} from Theorem \ref{thm:sixconditions}, and provide
its operational meaning as the main result of the work.

 \bt
 \label{thm:conversemonogamy}
 $\bf{(Converse~monogamy~of~entanglement).}$ Consider a tripartite
 state $\ket{\Ps}_{ABE}$ with entangled reduced states $\r_{AB}, \r_{AE}$ and
 $\r_{BE}$. When $\r_{AB}$ is a weakly entangled state $P$ or $N$
 ($D$), the states $\r_{AE}$ and $\r_{BE}$ are strongly
 entangled states $M$ ($D$ or $M$).
 \et
To our knowledge, the converse monogamy of entanglement is another
basic feature of quantum mechanics and there is no classical
counterpart since classical correlation can only be "quantified". In
contrast, quantum entanglement has qualitatively different levels of
strength and they have essentially different usefulness from each
other. For example the states in the subset $N$ cannot be distilled
while those in $M$ are known to be distillable \cite{HH}. So only
the latter can directly serve as an available resource for quantum
information processing and it implies the following paradox. {\bf A
useless entangled state between $A$ and $B$ strengthens the
usefulness of entanglement resource between $A$ ($B$) and the
environment}. Therefore, the converse monogamy of entanglement
indicates a dual property to the monogamy of entanglement: Not only
the amount of entanglement, the usefulness of entanglement in a
composite system is also restricted by each other.

Apart from bringing about the converse monogamy of entanglement,
Theorem \ref{thm:sixconditions} also promotes the study over a few
important problems. For instance, the equivalence of {\bf (1)} and
{\bf (2)} is a necessary and sufficient condition of deciding
separable states, beyond that for states of rank not exceeding 4
\cite{hhh96,hlv00,cd11}. Besides, the equivalence of {\bf (2)} and
{\bf (3)} indicates another kind of non-PPT entanglement activation
by PPT entanglement \cite{evw01}. For later convenience, we
explicitly work out the expressions of states satisfying the
assumptions in Theorem \ref{thm:sixconditions}.
 \bl
 \label{le:SNSSSS}¡¡
 The tripartite pure state with two non-distillable reduced states $\r_{AB}$ and $\r_{AC}$,
 if and only if it has the form $\sum_i \sqrt{p_i}
 \ket{b_i,i,i}$ up to local unitary operators. In other words, the
 reduced state $\r_{BC}$ is maximally correlated.
 \el
For the proof see Lemma 11 in \cite{cd11}. We apply our results to
handle two open problems in FIG. 4 of \cite{thapliyal99}, i.e.,
whether there exist tripartite states with two PPT bound entangled
reduced states, and tripartite states with two separable and one
bound entangled reduced states. Here we give negative answers to
these open problems in terms of Theorem~\ref{thm:sixconditions} and
Lemma \ref{le:SNSSSS}. As the first problem is trivial, we account
for the second conjecture. Because the required states have the form
$\sum^d_{i=1} \sqrt{p_i}\ket{ii}\ket{c_i}$, where $\r_{AC}$ and
$\r_{BC}$ are separable. So the reduced state $\r_{AB}$ is a
maximally correlated state, which is either separable or
distillable. It readily denies the second problem.

It is noticeable that the converse of Theorem
\ref{thm:sixconditions} doesn't hold. That is, for a tripartite
state $\ket{\Ps}_{ABC}$ for which conditions {\bf (1)}-{\bf (6)},
{\bf (5')}, and {\bf (6')} are equivalent for $\rho_{AB}$, the
reduced state $\rho_{BC}$ is not necessarily non-distillable. An
example is the state $\ket{000}+(\ket{0}+\ket{1})\ket{11}$.

Finally we extend Theorem \ref{thm:sixconditions} for the tripartite
state containing a qubit reduced state. For this purpose we
introduce the following known result \cite[Remark 1]{cag99}.

 \bl
 \label{le:2xN,PPT=Reductioncriterion}
A $2\times N$ state is PPT if and only if it satisfies the reduction
criterion.
 \el
Based on it we have
 \bt
 \label{thm:sixconditions,qubit}
Suppose $\ket{\Ps}_{ABC}$ contains a qubit reduced state. If
$\rho_{BC}$ satisfies condition {\bf (4)}, then conditions {\bf
(1)}-{\bf (6)}, {\bf (5')}, and {\bf (6')} are equivalent for
$\rho_{AB}$.
 \et
 \bpf
When $\r_{BC}$ contains a qubit reduced state, the claim follows
from Lemma \ref{le:2xN,PPT=Reductioncriterion} and Theorem
\ref{thm:sixconditions}. So it suffices to consider the case
$\rank\r_A=2$. Since $\rho_{BC}$ satisfies condition {\bf (4)}, we
obtain $\rank\r_A \ge \rank\r_B, \rank\r_C$. In other word $\r_{BC}$
is a two-qubit state and the claim again follows from Lemma
\ref{le:2xN,PPT=Reductioncriterion} and Theorem
\ref{thm:sixconditions}. This completes the proof.
 \epf
For tripartite states with higher dimensions, Theorem
\ref{thm:sixconditions,qubit} does not apply anymore and we will see
available examples in next section. As the concluding remark of this
section, we propose the following conjecture.
 \bcj
For a tripartite $3\times3\times3$ state $\ket{\Ps}_{ABC}$, suppose
$\r_{BC}$ satisfies {\bf (4)} and $\r_{AB}$ satisfies {\bf (5)}.
Then $\r_{AB}$ also satisfies {\bf (4)}.
 \ecj

\section{Classification with reduced states}

Theorem \ref{thm:conversemonogamy} says that the quantum correlation
between two parties of a tripartite system is dependent on the third
party. From Theorem \ref{thm:sixconditions} and the discussion to
conjectures in \cite{thapliyal99}, we can see that the tripartite
pure state with some specified bipartite reduced states may not
exist. This statement leads to a classification of tripartite states
in terms of the three reduced states \cite{sa08}. As a result, we
obtain the different subsets of tripartite states in Table
\ref{tab:tripartite} in terms of tensor rank and local ranks of each
one-party reduced state. In the language of quantum information, the
tensor rank of a multipartite state, also known as the Schmidt
measure of entanglement~\cite{eb01}, is equal to the least number of
product states to expand this state. For instance, any multiqubit
GHZ state has tensor rank two. So tensor rank is bigger or equal to
any local rank of a multipartite pure state. As tensor rank is
invariant under invertible SLOCC \cite{cds08}, it has been widely
applied to classify SLOCC-equivalent multipartite states recently
\cite{ccd10}.

Here we will see that, tensor rank is also essential to the
classification in Table \ref{tab:tripartite}. We will justify the
statement for each subset in Table \ref{tab:tripartite}, then we
show their nonemptyness by proposing specific examples.

First the statement for the subsets $\cS_{SSS},\cS_{SSM}$ and
$\cS_{SMM}$ follows from Lemma \ref{le:SNSSSS}, and Lemma 2 in
\cite{dfx07}, respectively. The nonemptyness readily follows from
the states $\ket{\ps}_{ABC}$ in Table \ref{tab:tripartite}. To
verify the statement for $\cS_{PMM}$ and $\cS_{NMM}$, we propose the
following result

\begin{lemma}
  \label{le:rankAB=rankA,B}
  A $M \times N$ state with rank $N$ is non-distillable if and only
  if it is separable and is the convex sum of just $N$ product states, i.e.,
  $\sum^N_{i=1} p_i \proj{a_i,b_i}$.
\end{lemma}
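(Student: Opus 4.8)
The plan is to prove the two implications separately, with essentially all the content in the forward (``only if'') direction. The converse is immediate: $\sum_{i=1}^{N}p_{i}\proj{a_{i},b_{i}}$ is separable, hence non-distillable by the chain ${\bf(1)}\Rightarrow{\bf(3)}$ recalled in Sec.~II.

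For the forward direction, suppose $\rho_{AB}$ is a state on $\mathbb{C}^{M}\otimes\mathbb{C}^{N}$ with $\rank\rho_{AB}=N$ that is non-distillable; reading ``$M\times N$ state'' to mean that both marginals have full rank, we have $\rank\rho_{B}=N$, so $\rank\rho_{AB}=\rank\rho_{B}$ --- the global rank coincides with the rank of the $B$-marginal. The key move is then to invoke the low-rank analysis of Horodecki--Lewenstein--Vidal--Cirac and its companions: a bipartite state whose global rank is at most the larger of its two marginal ranks is either separable or distillable, and in the separable case is a convex combination of exactly $\rank\rho_{AB}$ pure product states whose vectors span $\cR(\rho_{AB})$. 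Since $\rho_{AB}$ is non-distillable, the first part forces it to be separable, and the second part then gives $\rho_{AB}=\sum_{i=1}^{N}p_{i}\proj{a_{i},b_{i}}$ with exactly $N$ terms.

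I expect the genuinely non-trivial ingredient to be the first half of that dichotomy --- that a non-distillable state of this rank is separable, i.e.\ that there is no bound entanglement, PPT or NPT, in this rank regime. That cannot follow from general principles, since the existence of NPT bound entangled states is open; it is the rank equality $\rank\rho_{AB}=\rank\rho_{B}$ that makes it work, by allowing one to write down an explicit one-copy distillation protocol whenever the partial transpose fails to be positive. A secondary, purely technical subtlety --- internal to the Horodecki--Lewenstein--Vidal--Cirac argument, and which I would cite rather than reprove --- is that the inductive peeling of product vectors out of the range must be organized so as to respect partial-transpose positivity at every stage, so that separability, and hence the exact count of $N$ product states, is preserved all the way down the recursion.
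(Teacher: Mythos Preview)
Your proposal is correct and follows essentially the same two-step structure as the paper's proof: first use that an NPT $M\times N$ state of rank $N$ is distillable (the paper cites this specifically as Theorem~10 of \cite{cd11}, your ``companions'' to HLVC) to conclude that non-distillability forces PPT, and then invoke \cite{hlv00} to get that a PPT state of rank $N$ is separable as a convex sum of exactly $N$ product states. The paper's proof is simply a two-line citation of these two ingredients, so your reading of the argument matches it exactly.
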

 \bpf
It suffices to show the necessity. Let $\r$ be a $M \times N$ state
with rank $N$ and suppose it is non-distillable. From Lemma 11 in
\cite{cd11} we obtain $\r$ is PPT. The claim then follows from
\cite{hlv00}. This completes the proof.
 \epf

As a result, the PPT entangled state $\r_{AB}$ satisfies
$d_A,d_B<\rank \r_{AB}$. So the purification of PPT entangled states
$\r_{AB}$ is subject to the statement for subset $\cS_{PMM}$ in
Table \ref{tab:tripartite}. It also shows the nonemptyness of
$\cS_{PMM}$. A similar argument can be used to justify the statement
for $\cS_{NMM}$ in Table \ref{tab:tripartite}, if it really exists.

Next we study the subset $\cS_{DDD}$. Since the reduced state
$\r_{AB}$ satisfies the reduction criterion, we have $d_C \ge d_A,
d_B $. By applying the same argument to other reduced states we
obtain $d_A=d_B=d_C$. Since $\r_{AB}$ is distillable, the rest
statement $r>d_A$ in Table \ref{tab:tripartite} follows from the
following observation.

\begin{lemma}
\label{ha-l-2} Assume that $\rk(\Ps) = \max \{ d_A, d_B \}$. Then
the conditions {\bf (1)}-{\bf (4)} are equivalent for $\rho_{AB}$.
\end{lemma}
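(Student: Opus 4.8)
The plan is to establish the one non‑trivial implication. Since $\mathbf{(1)}\Rightarrow\mathbf{(2)}\Rightarrow\mathbf{(3)}\Rightarrow\mathbf{(4)}$ holds for every bipartite state, it suffices to prove $\mathbf{(4)}\Rightarrow\mathbf{(1)}$: if $\rho_{AB}$ satisfies the reduction criterion then it is separable. Without loss of generality take $d_A\ge d_B$, so the hypothesis reads $\rk(\Psi)=d_A$.

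First I would pin down $\rank\rho_{AB}$. Because $\mathbf{(4)}\Rightarrow\mathbf{(5)}$ we have the majorization relation $\rho_A\succ\rho_{AB}$, and majorization can only spread a spectrum, so $\rank\rho_{AB}\ge\rank\rho_A=d_A$. On the other hand the complementary reductions of a tripartite pure state are isospectral and the tensor rank dominates every local rank, so $\rank\rho_{AB}=\rank\rho_C\le\rk(\Psi)=d_A$. Hence $\rank\rho_{AB}=d_A=\max\{d_A,d_B\}=\rank\rho_C$. Restricted to the supports of its marginals, $\rho_{AB}$ is then a $d_B\times d_A$ state of rank $d_A$, so Lemma~\ref{le:rankAB=rankA,B} and the low‑rank results it rests on (\cite{cd11,hlv00}) apply: for such a state, separability, the PPT property, and non‑distillability are mutually equivalent. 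It therefore remains only to show that $\rho_{AB}$ is PPT (equivalently, non‑distillable).

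For this last step I would work in a minimal tensor decomposition $\ket{\Psi}=\sum_{i=1}^{d_A}\ket{x_i,y_i,z_i}$. Counting dimensions, $\{\ket{x_i}\}$ spans $\supp\rho_A$ and $\{\ket{z_i}\}$ spans $\supp\rho_C$ — each of dimension $d_A$ by the previous paragraph — so both families are linearly independent, whereas $\{\ket{y_i}\}$ spans only the $d_B$‑dimensional $\supp\rho_B$. Applying the invertible local map on $A$ sending $\ket{x_i}\mapsto\ket{i}$ (which leaves separability, the PPT property, and the reduction criterion with respect to $A$ intact) we may assume $\rho_{AB}=\sum_{ij}\braket{z_j}{z_i}\,\ketbra{i}{j}_A\ox\ketbra{y_i}{y_j}_B$. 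The idea is to exploit the positivity of $\rho_A\ox I_B-\rho_{AB}$, which in this basis is a constraint on the Gram matrices of $\{\ket{z_i}\}$ and $\{\ket{y_i}\}$, to force $\rho_{AB}^{T_B}\ge0$; equivalently, argue by contradiction: if $\rho_{AB}$ were NPT it would — being of rank $\max\{d_A,d_B\}$ — be $1$‑distillable, so a maximally entangled pair could be extracted by a Schmidt‑rank‑$2$ local filter, producing a $2\times2$ entangled state, which violates the reduction criterion by Lemma~\ref{le:2xN,PPT=Reductioncriterion}; pushing this back through the reduction inequality for $\rho_{AB}$ should contradict $\rho_A\ox I_B\ge\rho_{AB}$. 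Once $\rho_{AB}$ is shown PPT, separability follows and $\mathbf{(1)}$–$\mathbf{(4)}$ all coincide.

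The main obstacle is exactly this last step, i.e.\ passing from ``low rank together with the reduction criterion'' to ``PPT''. The delicate point is that the reduction criterion is \emph{not} in general inherited by (or from) the $2\times2$ restrictions used in distillation, so a naive projection argument fails in both directions; the contradiction must be squeezed out of the global reduction inequalities together with the rigidity imposed by $\rk(\Psi)=\max\{d_A,d_B\}$ (equivalently, by the structural content behind Lemma~\ref{le:rankAB=rankA,B} and \cite{cd11}). The remaining ingredients — the rank collapse and the reduction to the PPT question — are routine given the cited results.
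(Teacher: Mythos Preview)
Your reduction of the problem to showing ``$\rho_{AB}$ satisfies the reduction criterion $\Rightarrow$ $\rho_{AB}$ is PPT'' is a legitimate strategy, and your rank computation $\rank\rho_{AB}=d_A$ is correct. However the last step, which you yourself flag as the ``main obstacle'', is a genuine gap: the distillation-by-contradiction sketch does not work, for exactly the reason you state --- the reduction inequality $\rho_A\otimes I_B\ge\rho_{AB}$ is \emph{not} stable under the Schmidt-rank-$2$ local filters used in $1$-distillability, so producing a $2\times 2$ NPT block from an NPT $\rho_{AB}$ gives no contradiction with condition~{\bf(4)}. Nothing in the cited results \cite{cd11,hlv00} or in Lemma~\ref{le:2xN,PPT=Reductioncriterion} closes this hole.

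The missing idea is that you have already done the hard work without noticing it. After your invertible map $\ket{x_i}\mapsto\ket{i}$ on $A$, the transformed state is $\ket{\Psi'}\propto\sum_i\ket{i,y_i,z_i}$, so the \emph{complementary} reduced state is $\rho'_{BC}\propto\sum_i\proj{y_i,z_i}$, which is manifestly separable and in particular non-distillable. Now Theorem~\ref{thm:sixconditions} applies directly to $\ket{\Psi'}$: conditions {\bf(1)}--{\bf(6)} coincide for $\rho'_{AB}$. Since the $A$-side reduction inequality $\rho'_A\otimes I_B\ge\rho'_{AB}$ is preserved by your local map (conjugating both sides by $V_A^{-1}\otimes I_B$), $\rho'_{AB}$ satisfies~{\bf(4)}, hence~{\bf(1)}, hence $\rho_{AB}$ is separable. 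This is precisely the paper's argument; your rank analysis and the appeal to Lemma~\ref{le:rankAB=rankA,B} for $\rho_{AB}$ are then unnecessary.
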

\begin{proof}
It suffices to show that the state $\rho_{AB}$ is separable when it
satisfies the reduction criterion. Let $|\Psi
\rangle=\sum_{i=1}^{\rk(\Ps)} \sqrt{p_i} |a_i,b_i,c_i\rangle$, and
$V_A$ an invertible matrix such that $V_A|i\rangle=|a_i\rangle$.
Now, we focus on the pure state $|\Psi' \rangle= K V_A^{-1} \otimes
I_B\otimes I_C |\Psi \rangle$, where $K$ is the normalized constant.
Then the reduced state $\rho_{AB}'$ satisfies $\rho_A' \otimes I_B
\ge \rho_{AB}'$, and hence $\rho_A' \succ \rho_{AB}'$ \cite{Hiro}.
Since $\rho_{BC}'$ is separable, we have $\rho_A' \sim \rho_{AB}'$.
So the state $\rho_{AB}'$, and equally $ \rho_{AB} $  is separable
in terms of Theorem \ref{thm:sixconditions}.
\end{proof}

 \bex
It's noticeable that under the same assumption in Lemma
\ref{ha-l-2}, the equivalence between conditions {\bf (1)}-{\bf (5)}
does not hold. As a counterexample, we consider the symmetric state
$|\Psi\rangle = \frac{1}{\sqrt{r+7}}(\sum_{i=2}^r|iii\rangle +
(|1\rangle +|2 \rangle )(|1\rangle +|2 \rangle )(|1\rangle +|2
\rangle ))$. It is symmetric and thus satisfies condition {\bf (5)}.
On the other hand one can directly show that any reduced state of
$|\Psi\rangle$ violate the reduction criterion, so it does not
satisfy conditions {\bf (1)}-{\bf (4)}. Hence $|\Psi\rangle$ belongs
to the subset ${\cS}_{MMM}$. It indicates that tensor rank alone is
not enough to characterize the hierarchy of bipartite entanglement.
 \eex

 \bex
The symmetric state $\ket{\ps_r}=
\frac{1}{\sqrt{2r}}(|312\rangle+|123\rangle+|231\rangle+|213\rangle+|132\rangle+|321\rangle)
+\frac{1}{\sqrt{r}} \sum_{j=4}^r|jjj\rangle$ indicates the
nonemptyness of $ \cS_{DDD}$ for $d_A \ge 3$. To see it, it suffices
to show one of the reduced states, say $\r_{AB}$ satisfies the
reduction criterion and is distillable simultaneously. The former
can be directly justified. By performing the local projector
$P=\proj{1}+\proj{2}$ on system A, we obtain the resulting state
$P\ox I \r_{AB} P\ox I = (\ket{12}+\ket{21})(\bra{12}+\bra{21})$
which is a Bell state. So $\r_{AB}$ is distillable.
 \eex
However there is no state with $d_A=2$ in $ \cS_{DDD}$. The reason
is that a two-qubit state satisfying the reduction criterion is also
PPT by Lemma \ref{le:2xN,PPT=Reductioncriterion}. Hence it must be
separable in terms of Peres' condition \cite{peres96}. It
contradicts with the statement that $\r_{AB}$ is distillable.

By using a similar argument to $\cS_{DDD}$, one can verify the
statement for $\cS_{DDM}$ in Table \ref{tab:tripartite}. A concrete
example will be built by the rule of monoid later. Hence $\cS_{DDM}$
is nonempty.

Third, we characterize the subset $\cS_{DMM}$ by the tensor rank of
states in this subset. It follows from the definition of reduction
criterion that $d_C \ge d_A, d_B $. This observation and Lemma
\ref{ha-l-2} justify the statement for $\cS_{DMM}$ in Table
\ref{tab:tripartite}. In order to show the tightness of these
inequalities, we consider the non-emptyness of the subset
$\cS_{DMM}$ with the boundary types $r = d_C > d_A= d_B$ and $r >
d_C = d_A = d_B$.
 \bex
To justify the former type, it suffices to consider the state
$\ket{\ps_a}=\frac{1}{\sqrt{6+3a^2}}(\ket{123}+\ket{231}+\ket{312}
+\ket{21}(\ket{3}+a\ket{6})+\ket{13}(\ket{2}+a\ket{5})+\ket{32}(\ket{1}+a\ket{4})),a\in
R$. It obviously fulfils $r = d_C > d_A= d_B$, so the reduced states
$\r_{AC}$ and $\r_{BC}$ violate the reduction criterion. Next we
focus on the reduced state $\r_{AB}$. By performing the local
projector $P=\proj{1}+\proj{2}$ on system A, we obtain the resulting
state $P\ox I \r_{AB} P\ox I =
(\ket{12}+\ket{21})(\bra{12}+\bra{21})+a^2\proj{21}$. This is an
entangled two-qubit state and is hence distillable \cite{hhh97}. On
the other hand to see that $\r_{AB}$ fulfils the reduction criterion
for any $a$, one need notice the facts $\r_A=\r_B=\frac13I$ and the
eigenvalues of $\r_{AB}$ are not bigger than $\frac13$.
 \eex
In addition, an example of the latter type is constructed by the
rule of monoid later. Thus, we can confirm the tightness of the
above inequalities of two boundary types for $\cS_{DMM}$.

To conclude, we have verified the statement and existence of all
essential subsets in Table \ref{tab:tripartite} except the
$\cS_{NMM}$.



\begin{widetext}

\begin{table}
 \caption{\label{tab:tripartite}
 Classification of tripartite states
 $\ket{\ps}_{ABC}$ in terms of the bipartite reduced states.
 The table contains neither the classes generated from the
 permutation of parties, and nor the subset $\cS_{MMM}$ since for which there is no
 fixed relation between the tensor rank $\rk(\ps)$ and
 {\em local ranks} $d_A(\Psi)$, $d_B(\Psi)$,
 and $d_C(\Psi)$. They are simplified to $r$, $d_A$, $d_B$, and $d_C$ when there is no confusion. All expressions
 are up to local unitaries (LU) and all sums run from 1 to $r$.
 In the subset $\cS_{SMM}$, the linearly independent states $\ket{c_i}$ are the support of space $\cH_C$.
 Apart from the subset $\cS_{NMM}$, all other eight subsets are nonempty in terms of specific examples in Sec. III. }
{
  \begin{tabular}{ | c| c| c|c|}
    \hline
    $\cS_{X_{AB}X_{BC}X_{CA}}$
    & $\cS_{SSS}$
    & $\cS_{SSM}$
    & $\cS_{SMM}$
    \\
    \hline
    \begin{tabular}{c}
    expression of
    \\
    $\ket{\ps}_{ABC}$
    \end{tabular}
    & $\displaystyle \small \sum_i \sqrt{p_i} \ket{iii}$
    & $\displaystyle \small \sum_i \sqrt{p_i} \ket{i,b_i,i}$
    & $\displaystyle \small \sum_i \sqrt{p_i} \ket{a_i,b_i,c_i}$
    \\
    \hline
    \begin{tabular}{c}
    tensor rank
    \\and
    \\local ranks
    \end{tabular}
    &
    \begin{tabular}{c}
    $r=d_A=$\\
    $d_B=d_C$
    \end{tabular}
    &
    \begin{tabular}{c}
    $r=d_A=$\\
    $d_C \ge d_B$
    \end{tabular}
    &
    \begin{tabular}{c}
    $r=$\\
    $d_C \ge d_A, d_B$
    \end{tabular}
    \\
    \hline
  \end{tabular}
  }

  \begin{tabular}{ | c| c| c| c|}
    \hline
    $\cS_{X_{AB}X_{BC}X_{CA}}$
    & $\cS_{PMM}$
    & $\cS_{NMM}$
    & $\cS_{DDD}$
    \\
    \hline
    \begin{tabular}{c}
    expression of
    \\
    $\ket{\ps}_{ABC}$
    \end{tabular}
    & $\displaystyle \small \sum_i \sqrt{p_i} \ket{a_i,b_i,c_i}$
    & $\displaystyle \small \sum_i \sqrt{p_i} \ket{a_i,b_i,c_i}$
    & $\displaystyle \small \sum_i \sqrt{p_i} \ket{a_i,b_i,c_i}$
    \\
    \hline
    \begin{tabular}{c}
    tensor rank
    \\and
    \\local ranks
    \end{tabular}
    &
    \begin{tabular}{c}
    $r \ge$\\
    $d_C > d_A, d_B$
    \end{tabular}
    &
    \begin{tabular}{c}
    $r \ge$\\
    $d_C > d_A, d_B$
    \end{tabular}
    &
    \begin{tabular}{c}
    $r >$\\
    $d_C = d_B = d_A$
    \end{tabular}
    \\
    \hline
  \end{tabular}

  \begin{tabular}{ | c| c| c|}
    \hline
    $\cS_{X_{AB}X_{BC}X_{CA}}$
    & $\cS_{DDM}$
    & $\cS_{DMM}$
    \\
    \hline
    \begin{tabular}{c}
    expression of
    \\
    $\ket{\ps}_{ABC}$
    \end{tabular}
    & $\displaystyle \small \sum_i \sqrt{p_i} \ket{a_i,b_i,c_i}$
    & $\displaystyle \small \sum_i \sqrt{p_i} \ket{a_i,b_i,c_i}$
    \\
    \hline
    \begin{tabular}{c}
    tensor rank
    \\and
    \\local ranks
    \end{tabular}
    &
    \begin{tabular}{c}
    $r >$\\
    $d_C = d_A \ge d_B$
    \end{tabular}
    &
    \begin{tabular}{c}
    $r \ge d_C \ge d_A, d_B$\\
    $r > d_A, d_B$
    \end{tabular}
    \\
    \hline
  \end{tabular}

\end{table}
\end{widetext}

\medskip
\noindent \textit{Comparison to SLOCC classification}. We know that
there are much efforts towards the classification of multipartite
state by invertible SLOCC \cite{dvc00,cds08,ccd10}. Hence, it is
necessary to clarify the relation between this method and the
classification by reduced states in Table \ref{tab:tripartite}. When
we adopt the former way we have no clear characterization to the
hierarchy of bipartite entanglement between the involved parties,
i.e., the structure of reduced states becomes messy under SLOCC. Our
classification resolves this drawback. Another potential advantage
of our idea is that we can apply the known fruitful results of
bipartite entanglement, such as the hierarchy of entanglement to
further study the classification problem.

Here we explicitly exemplify that the invertible SLOCC only
partially preserves the classification in Table
\ref{tab:tripartite}. We focus on the orbit $\cO_{r=d_A=d_B=d_C}:=
\{ |\Psi\rangle | \rk(\Psi)= d_A(\Psi)=d_B(\Psi)=d_C(\Psi) \}$,
which has non-empty intersection with the subsets $\cS_{SSS}$,
$\cS_{SSM}$, and $\cS_{SMM}$. Further, since the subset $\cS_{MMM}$
contains the state $|\Psi_a\rangle$, it also has non-empty
intersection with $\cO_{r=d_A=d_B=d_C}$. Since all state in
$\cO_{r=d_A=d_B=d_C}$ can be converted to GHZ state by invertible
SLOCC, it does not preserve the classification by reduced states.
However, the subsets $\cS_{DDM}$ and $\cS_{DDD}$ are not mixed with
$\cS_{SSS}$, $\cS_{SSM}$, and $\cS_{SMM}$ by invertible SLOCC.

\medskip
\noindent \textit{Monoid structure}. To get a further understanding
of Table \ref{tab:tripartite} from the algebraic viewpoint, 
we define the direct sum for subsets by
$\cS_{X_{AB}X_{BC}X_{CA}}\cS_{Y_{AB}Y_{BC}Y_{CA}}:=
\cS_{\max\{X_{AB},Y_{AB}\}
\max\{X_{BC},Y_{BC}\}\max\{X_{CA},Y_{CA}\}}$, where $\max\{X,Y\}$ is
the larger one between $X$ and $Y$ in the order $S \le P \le N \le D
\le M$. Therefore when $\ket{\Psi_1} \in {\cal
S}_{X_{AB}X_{BC}X_{CA}}$ and $\ket{\Psi_2} \in {\cal
S}_{Y_{AB}Y_{BC}Y_{CA}}$, the state $\ket{\Psi_1\cdot
\Psi_2}:=\ket{\Psi_1} \oplus \ket{\Psi_2} \in
\cS_{\max\{X_{AB},Y_{AB}\}
\max\{X_{BC},Y_{BC}\}\max\{X_{CA},Y_{CA}\}}$. This product is
commutative and in the direct sum, the subset $\cS_{SSS}$ is the
unit element but no inverse element exists. So the family of
non-empty sets $\cS_{X_{AB}X_{BC}X_{CA}}$ with the direct sum is an
abelian monoid, which is a commutative semigroup associated with the
unit.

The above analysis provides a systematic method to produce the
subsets in the monoid structure, except $\cS_{NMM}$ whose existence
is an open problem so far. Generally we have
$\cS_{SMM}=\cS_{SSM}\cS_{SMS}, \cS_{DDM} = \cS_{DDD}\cS_{SSM},
\cS_{DMM}=\cS_{DDD}\cS_{SMM}$, and $\cS_{MMM} = \cS_{PMM}\cS_{MSS}$.
So it is sufficient to check the non-emptyness of subsets
$\cS_{SSS}, \cS_{SSM}, \cS_{PMM}, \cS_{DDD}$. This fact has been
verified in Sec. III, and we can use the method to produce states in
$\cS_{DMM}$. The following is an example.


If we choose $\ket{\Psi_1} \in {\cal S}_{SMM}$ and $\ket{\Psi_2} \in
{\cal S}_{DDD}$ and both have $d_A=d_B=d_C $. Then the state
$\ket{\Psi_1\cdot \Psi_2}$ verifies the non-emptiness of the subset
${\cal S}_{DMM}$ with the condition  $r > d_C = d_A = d_B$. The
arguments have verified the existence of a boundary type mentioned
in the second paragraph below Lemma \ref{ha-l-2}.

\section{ generalization to multipartite system}


We begin by introducing the following results from
\cite{hst99,thapliyal99}. By fully separable states $\r$ of
$N$-partite systems, we mean $\r=\sum_i p_i \r_{1,i} \ox \cdots
\ox\r_{N,i}$.

 \bl
\label{le:PPTMxNrank<M,N} The $M\times N$ states of rank less than
$M$ or $N$ are distillable, and consequently they are NPT.
 \el

\begin{lemma}
 \label{le:Nfullyseparable=GHZ}
  The $N$-partite state $\ket{\ps}$ has $N$ fully separable $(N-1)$-partite reduced
  states if and only if $\ket{\ps}$ is a generalized GHZ state $\sum_i\sqrt{p_i}\ket{ii\cdots i}$ up to
  LU.
\end{lemma}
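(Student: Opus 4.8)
The plan is to dispose of the trivial implication in one line and prove the substantive one by induction on $N$, using the tripartite Lemma~\ref{le:SNSSSS} as the base case. For the ``if'' part: tracing out any single party $A_m$ from $\sum_i\sqrt{p_i}\ket{ii\cdots i}$ leaves $\sum_i p_i\proj{i}^{\ox(N-1)}$ (the orthonormal vectors on $A_m$ kill the cross terms), which is fully separable, and the same holds for each of the $N$ parties, up to LU. For the ``only if'' part I would first record the two stability facts that a partial trace of a fully separable state is fully separable and that merging tensor factors preserves full separability; hence, if all $N$ of the $(N-1)$-partite marginals of $\ket{\ps}$ are fully separable, then so is every marginal on a proper subset of the parties, and this survives regrouping. (For $N=2$ the statement is just the Schmidt decomposition and the hypothesis is vacuous, so I take $N=3$ as the genuine base case.)

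Base case $N=3$. Since $\r_{A_1A_2}$ and $\r_{A_1A_3}$ are separable, hence non-distillable, Lemma~\ref{le:SNSSSS} brings $\ket{\ps}$, up to LU, to $\sum_i\sqrt{p_i}\ket{b_i}_{A_1}\ket{i}_{A_2}\ket{i}_{A_3}$ with $p_i>0$, $\{\ket{i}\}$ orthonormal, so that $\r_{A_2A_3}=\sum_{ij}\sqrt{p_ip_j}\braket{b_j}{b_i}\ket{ii}\bra{jj}$ is maximally correlated. Invoking the remaining hypothesis, separability of $\r_{A_2A_3}$, together with the elementary fact that a separable maximally correlated state must be diagonal, gives $\braket{b_i}{b_j}=0$ for $i\ne j$; so $\{\ket{b_i}\}$ is orthonormal and a local unitary on $A_1$ produces the gGHZ form. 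I would prove that elementary fact from the lemmas at hand: a separable maximally correlated state is non-distillable, hence of full local rank by Lemma~\ref{le:PPTMxNrank<M,N}, hence a sum of product projectors by Lemma~\ref{le:rankAB=rankA,B}, and since the only product vectors lying in $\lin\{\ket{ii}\}$ are the $\ket{ii}$ themselves, it must be diagonal; alternatively one can simply cite \cite{rains99}.

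Inductive step $N\ge4$. Group $A_{N-1}$ and $A_N$ into one system $A'$. The resulting $(N-1)$-partite state has all of its $(N-2)$-partite marginals fully separable (those obtained by tracing some $A_j$ with $j\le N-2$ are regroupings of original $(N-1)$-marginals, and the one obtained by tracing $A'$ is a further partial trace of the original marginal omitting $A_N$), so by the induction hypothesis it is gGHZ up to LU; absorbing the unitary on $A'$ into the $\ket{e_i}$, we get $\ket{\ps}=\sum_i\sqrt{p_i}\ket{i}_{A_1}\cdots\ket{i}_{A_{N-2}}\ket{e_i}_{A'}$ after local unitaries on $A_1,\dots,A_{N-2}$ only, with $p_i>0$ and $\{\ket{e_i}\}$ orthonormal in $\cH_{A_{N-1}}\ox\cH_{A_N}$ (so the separability hypotheses still apply). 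Two steps now finish. First, each $\ket{e_i}$ is a product vector: the marginal $\r_{A_1A_{N-1}A_N}$ is fully separable and equals $\sum_i p_i\proj{i}_{A_1}\ox\proj{e_i}_{A_{N-1}A_N}$ (cross terms die because the orthonormal $\ket{i}$ are carried by $A_2$, which is traced out), so projecting $A_1$ onto $\ket{\ell}$ writes the rank-one operator $p_\ell\proj{e_\ell}$ as a nonnegative combination of product projectors, forcing $\ket{e_\ell}=\ket{f_\ell}_{A_{N-1}}\ket{g_\ell}_{A_N}$ with unit factors. Second, $\{\ket{f_i}\}$ and $\{\ket{g_i}\}$ are orthonormal: the fully separable marginal $\r_{A_1\cdots A_{N-1}}$ (omit $A_N$) equals $\sum_{ij}\sqrt{p_ip_j}\braket{g_j}{g_i}\ket{i\cdots i}\bra{j\cdots j}_{A_1\cdots A_{N-2}}\ox\ket{f_i}\bra{f_j}_{A_{N-1}}$, which, read across the cut $A_1 \,|\, A_2\cdots A_{N-1}$, is a maximally correlated state with correlation matrix $[\sqrt{p_ip_j}\braket{g_j}{g_i}]$ (both local families are orthonormal, again because $A_2$ carries the $\ket{i}$); separability forces this matrix diagonal, i.e. $\{\ket{g_i}\}$ orthonormal, and symmetrically (omitting $A_{N-1}$) $\{\ket{f_i}\}$ is orthonormal. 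Applying $\ket{f_i}\mapsto\ket{i}$ on $A_{N-1}$ and $\ket{g_i}\mapsto\ket{i}$ on $A_N$ turns $\ket{\ps}$ into $\sum_i\sqrt{p_i}\ket{ii\cdots i}$.

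I expect the main obstacle to be the post-processing in the inductive step: recognizing that the $\ket{f_i},\ket{g_i}$ cannot be pinned down by an entropy or purity comparison of the two marginals of a pure state (this fails, since these vectors need not be orthonormal a priori), but that the relevant $(N-1)$-partite marginals, once the state has been partially standardized, are maximally correlated states across cleverly chosen bipartite cuts, so that the single fact ``separable maximally correlated $\Rightarrow$ diagonal'' carries the whole argument. The remaining points --- proving that fact and verifying that enough marginals stay fully separable after grouping parties --- are routine given Lemmas~\ref{le:SNSSSS}, \ref{le:PPTMxNrank<M,N} and \ref{le:rankAB=rankA,B}.
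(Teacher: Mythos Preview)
Your proof is correct, but the paper does not actually prove this lemma: it is introduced as a known result cited from \cite{hst99,thapliyal99}. What the paper \emph{does} prove is the stronger Theorem~\ref{thm:ppt,n} (and its corollary), where the hypothesis is weakened from full separability to multipartite non-distillability; specialized to $n=N$, that proof yields a proof of the present lemma, and it is worth comparing to yours.

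The paper's argument is not inductive. It groups $A_3,\dots,A_N$ into a single party, applies Lemma~\ref{le:SNSSSS} \emph{once} to reach $\ket{\ps}=\sum_i\sqrt{p_i}\ket{ii}\ket{\ph_i}$, shows each $\ket{\ph_i}$ is fully factorized by a contradiction-with-distillability argument (project system $A_2$ onto $\ket{i}$ to isolate $\ket{\ph_i}$), and then peels off the orthonormality of each tensor factor one party at a time via the fact that an entangled maximally correlated state is distillable \cite{rains99}. Your induction packages the same two mechanisms --- the tripartite structural Lemma~\ref{le:SNSSSS} and the ``separable maximally correlated $\Rightarrow$ diagonal'' step --- but deploys them recursively, splitting off two parties per step. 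The paper's route is shorter and, because it only ever uses non-distillability, immediately delivers the more general Theorem~\ref{thm:ppt,n}; your route is more modular and makes transparent exactly which marginal's separability is being consumed at each stage. One small remark: your path to ``separable maximally correlated $\Rightarrow$ diagonal'' through Lemmas~\ref{le:PPTMxNrank<M,N} and \ref{le:rankAB=rankA,B} is fine, but the direct range argument (the only product vectors in $\lin\{\ket{ii}\}$ are the $\ket{ii}$) is even quicker and avoids the detour.
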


Next, we generalize Lemma \ref{le:SNSSSS} and
\ref{le:Nfullyseparable=GHZ}. For this purpose we define the
\emph{$n$-partite non-distillable state} $\r_{1\cdots n}$, in the
sense that one cannot distill any pure entangled state by collective
LOCC over any bipartition of parties $1, \cdots, m: (m+1),\cdots,n
$. By "collective LOCC" we regard parties $1, \cdots, m$ and
$(m+1),\cdots,n$ as two local parties, respectively. Similarly, we
define the \emph{$n$-partite PPT state} in the sense that any
bipartition of the state is PPT. Hence, such states contain a more
restrictive quantum correlation than the general multipartite state.
Evidently, the multipartite PPT state is a special multipartite
non-distillable state. The converse is unknown even for the
bipartite case \cite{dss00}. In what follows we will show the
equivalence of multipartite non-distillable, PPT and fully separable
states which are reduced states of a multipartite pure state.

For convenience we denote $\r_{\overline{i}}$ as a $(N-1)$-partite
state by tracing out the party $A_i$ from the $N$-partite state
$\ket{\ps}$, i.e., $\r_{\overline{i}} = \tr_i \proj{\ps}$. With
these definitions we have

\begin{theorem}
  \label{thm:ppt,n}
The following four statements are equivalent for a $N$-partite state
$\ket{\ps}$.
\\(1) $\ket{\ps}$ has $n$ non-distillable $(N-1)$-partite reduced
states $\r_{\overline{1}}, \cdots, \r_{\overline{n}}, ~N\ge n\ge2$;
\\(2) $\ket{\ps}$ has $n$ PPT $(N-1)$-partite reduced
states $\r_{\overline{1}}, \cdots, \r_{\overline{n}}, ~N\ge n\ge2$;
\\(3) $\ket{\ps}$ has $n$ fully separable $(N-1)$-partite reduced
states $\r_{\overline{1}}, \cdots, \r_{\overline{n}}, ~N\ge n\ge2$;
\\(4)  $\ket{\ps}=\sum_i \sqrt{p_i} \ket{i}^{\ox n} \ket{b_{i,n+1}} \ox \cdots \ox \ket{b_{i,N}} $ up to LU.
\end{theorem}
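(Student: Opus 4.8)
The plan is to prove the cycle $(4)\Rightarrow(3)\Rightarrow(2)\Rightarrow(1)\Rightarrow(4)$, with essentially all the effort in $(1)\Rightarrow(4)$. The three other implications are routine. For $(4)\Rightarrow(3)$, tracing out a party $A_k$ with $k\le n$ from the state in~(4) annihilates all cross terms because the factors $\ket{i}_{A_k}$ are orthonormal, so $\r_{\overline k}=\sum_i p_i\,\proj{i}^{\ox(n-1)}\ox\bigotimes_{j>n}\proj{b_{i,j}}$ is a convex sum of product states; this yields the $n$ fully separable reduced states. For $(3)\Rightarrow(2)$, the partial transpose of a tensor product of density operators across any bipartition is again a tensor product of density operators, and positivity survives a convex combination, so a fully separable state is PPT across every bipartition. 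For $(2)\Rightarrow(1)$, a state PPT across every bipartition is non-distillable across every bipartition by the relation ${\bf(2)}\Rightarrow{\bf(3)}$ recalled in Sec.~II, applied bipartition by bipartition.

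For $(1)\Rightarrow(4)$ I would induct on $n\ge 2$, with $N$ arbitrary. In the base case $n=2$, regard $A_3,\dots,A_N$ as a single composite party $R$, so that $\ket{\ps}$ is a tripartite pure state on $A_1\ox A_2\ox R$ whose reduced states $\r_{A_1R}=\r_{\overline 2}$ and $\r_{A_2R}=\r_{\overline 1}$ are non-distillable across the cuts $\{A_1\}:R$ and $\{A_2\}:R$ (these are among the bipartitions over which $\r_{\overline 2}$ and $\r_{\overline 1}$ are assumed non-distillable). Lemma~\ref{le:SNSSSS} then gives $\ket{\ps}=\sum_i\sqrt{p_i}\,\ket{i}_{A_1}\ket{i}_{A_2}\ket{r_i}_R$ with $\{\ket{i}\}$ orthonormal. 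To conclude, observe that $\r_{\overline 1}=\sum_i p_i\,\proj{i}_{A_2}\ox\proj{r_i}_R$ holds a classical copy of the label $i$ on $A_2$: for an arbitrary bipartition of $\{A_2,\dots,A_N\}$ one may measure $A_2$ in the basis $\{\ket{i}\}$ --- a local operation on the side containing $A_2$ --- which collapses the state, with probability $p_i>0$, to the pure state $\proj{r_i}$; were $\ket{r_i}$ entangled across the induced cut of $\{A_3,\dots,A_N\}$ it would be distillable, contradicting non-distillability of $\r_{\overline 1}$. Hence each $\ket{r_i}$ is unentangled across every bipartition of $\{A_3,\dots,A_N\}$, i.e.\ $\ket{r_i}=\bigotimes_{j=3}^N\ket{b_{i,j}}$, which is~(4) for $n=2$.

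For the inductive step, assume~(4) holds for $n-1$. Given $\r_{\overline 1},\dots,\r_{\overline n}$ non-distillable, the inductive hypothesis applied to $\r_{\overline 1},\dots,\r_{\overline{n-1}}$ gives $\ket{\ps}=\sum_i\sqrt{p_i}\,\ket{i}^{\ox(n-1)}\ox\bigotimes_{j\ge n}\ket{b_{i,j}}$. Now group all parties other than $A_1$ and $A_n$ into a single composite party $K$, so that $\ket{\ps}$ is tripartite on $A_1\ox A_n\ox K$ with $\r_{A_1K}=\r_{\overline n}$ and $\r_{A_nK}=\r_{\overline 1}$ non-distillable across $\{A_1\}:K$ and $\{A_n\}:K$; Lemma~\ref{le:SNSSSS} now supplies a second expression $\ket{\ps}=\sum_i\sqrt{q_i}\,\ket{\tilde i}_{A_1}\ket{i}_{A_n}\ket{k_i}_K$ in which $\{\ket{i}_{A_n}\}$ is orthonormal. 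Both expressions are Schmidt decompositions of $\ket{\ps}$ across the cut $\{A_1\}:(\text{the rest})$ --- the ``rest'' vectors $\ket{u_i},\ket{v_j}$ are orthonormal in each case, thanks to the $\ket{i}_{A_2}$ factors in the first and the $\ket{i}_{A_n}$ factors in the second --- so by uniqueness of the Schmidt decomposition the coefficient multisets coincide and $\ket{v_j}=\sum_i W_{ij}\ket{u_i}$ for a unitary $W$ block-diagonal over the eigenspaces of $\r_{A_1}$. The $A_n$-factor of $\ket{u_i}$ is $\ket{b_{i,n}}$ while the remaining factors of $\ket{u_i}$ are orthonormal across different $i$, so tracing out everything but $A_n$ gives $\sum_i|W_{ij}|^2\proj{b_{i,n}}=\proj{j}_{A_n}$, a convex combination of rank-one projectors equal to a rank-one projector, which forces $\ket{b_{i,n}}\parallel\ket{j}_{A_n}$ whenever $W_{ij}\ne 0$. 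Since $\r_{A_n}=\sum_i q_i\proj{i}_{A_n}$ has rank equal to the number $r$ of nonzero $p_i$, whereas also $\r_{A_n}=\sum_i p_i\proj{b_{i,n}}$, the $\ket{b_{i,n}}$ must be linearly independent; together with the previous sentence this makes $\{\ket{b_{i,n}}\}$ an orthonormal family. Relabeling it as $\{\ket{i}_{A_n}\}$ (the stray phases absorbed into $\ket{b_{i,n+1}}$) puts $\ket{\ps}$ into the form~(4) for $n$, completing the induction.

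I expect the reconciliation step inside the induction to be the main obstacle: the diagonal structure on $A_1,\dots,A_{n-1}$ coming from the inductive hypothesis and the diagonal structure on $\{A_1,A_n\}$ produced by Lemma~\ref{le:SNSSSS} must be matched on the shared party $A_1$, and when $\r_{A_1}$ is degenerate the tempting shortcut ``both expressions diagonalize $\r_{A_n}$, hence share its eigenbasis'' fails --- it is the rank count for $\r_{A_n}$ that rescues it. A secondary but necessary point is to check at each step that the multipartite non-distillability hypothesis descends to exactly the bipartite non-distillability needed, both to invoke Lemma~\ref{le:SNSSSS} and to run the ``measure the classical register, distill the residual pure state'' argument of the base case.
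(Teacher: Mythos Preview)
Your proof is correct, and the base case $n=2$ coincides with the paper's argument: group $A_3,\dots,A_N$, apply Lemma~\ref{le:SNSSSS}, and use the ``measure the classical register on $A_2$'' trick to force each $\ket{r_i}$ to be fully product. The difference lies in how orthonormality on parties $A_3,\dots,A_n$ is obtained.

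The paper does not induct. After reaching $\ket{\ps}=\sum_i\sqrt{p_i}\,\ket{ii}\bigotimes_{j\ge 3}\ket{a_{i,j}}$, it fixes one $j\in\{3,\dots,n\}$, groups all parties except $A_2$ and $A_j$ (actually $A_1,A_4,\dots$ in the paper's wording), and observes that $\r_{\overline{j}}$ viewed across that cut is a \emph{maximally correlated} state with off-diagonal entries governed by $\braket{a_{i',j}}{a_{i,j}}$. Since any entangled maximally correlated state is distillable \cite{rains99}, non-distillability of $\r_{\overline{j}}$ forces those inner products to vanish, i.e.\ $\{\ket{a_{i,j}}\}$ is orthonormal. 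This is a one-line argument per party.

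Your inductive step replaces that single external fact by a second invocation of Lemma~\ref{le:SNSSSS} plus Schmidt-uniqueness and a rank count for $\r_{A_n}$. The reconciliation you flagged as the ``main obstacle'' is genuine but you handle it correctly: the key is that the rank of $\r_{A_n}$ computed from the second expression equals the Schmidt rank $r$ across $A_1:\text{rest}$, which forces the $r$ vectors $\ket{b_{i,n}}$ in $\r_{A_n}=\sum_i p_i\proj{b_{i,n}}$ to be linearly independent, and hence (being each parallel to some $\ket{j}_{A_n}$) orthonormal. This is more elaborate than the paper's route but has the virtue of staying entirely within Lemma~\ref{le:SNSSSS} and elementary linear algebra, without appealing to the distillability of maximally correlated states. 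The paper's approach is shorter; yours is more self-contained.
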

 \bpf
Suppose $\ket{\ps}$ is of dimensions $d_1\times \cdots \times d_N$.
The direction $(4)\ra(3)\ra(2)\ra(1)$ is evident. To show
$(1)\ra(4)$, suppose $\r_{\overline{1}}, \cdots, \r_{\overline{n}}$
are non-distillable. By using Lemma~\ref{le:PPTMxNrank<M,N} and the
definition of multipartite non-distillable states, we can obtain
$d:=d_1=\cdots=d_n \ge d_{n+1}, \cdots, d_N$. By combining the
parties $A_3, \cdots, A_N$ into one party, we obtain a tripartite
state satisfying Lemma \ref{le:SNSSSS}. Hence we have
$\ket{\ps}=\sum^d_{i=1} \sqrt{p_i} \ket{ii}\ket{\ph_i}$ where
$\ket{\ph_i} \in \bigox^N_{i=3} \cH_i$.

We show that $\ket{\ph_i}$ are fully product states. The proof is by
contradiction. Suppose there is, say $\ket{\ph_1}$ which is not
fully factorized. So we can write it as a bipartite entangled state
in the space $\cH_C \otimes \cH_D$. In other word we have the
4-partite state $\ket{\ps} \in \cH_1 \ox \cH_2 \ox \cH_C \ox \cH_D$
and the tripartite reduced state $\s_{2,C,D}=\sum^d_{i=1} \proj{i}_2
\ox \proj{\ph_i}_{CD}$. By performing the projector $\proj{1}$ on
space $\cH_2$, we can distill a pure entangled state $\ket{\ph_1}$
from $\s_{2,C,D}$. It contradicts with the assumption on
$\r_{\overline{1}}, \cdots,  \r_{\overline{n}}$. So every state
$\ket{\ph_i}$ has to be fully factorized and up to LU, we have
  \bea
  \ket{\ps}
  =
  \sum^d_{i=1} \sqrt{p_i} \ket{ii} \bigox^N_{j=3} \ket{a_{i,j}}.
  \eea

Next, we combine parties $A_1, A_4, \cdots, A_N$ together and make
$\ket{\ps}$ a new tripartite state. Because $\r_{\overline{3}}$ is
non-distillable and any entangled maximally correlated state is
distillable \cite{rains99}, the states $\ket{a_{i,3}}$ have to be
orthonormal. In a similar way we can prove the states
$\ket{a_{i,j}}, j=4, \cdots, n$ are orthonormal, respectively. So we
have justified the statement. This completes the proof of
$(1)\ra(4)$. So all four statements (1),(2),(3),(4) are equivalent.
 \epf

The following result is a stronger version of Lemma
\ref{le:Nfullyseparable=GHZ}.

\bcr
The $N$-partite state $\ket{\ps}$ has $N$ non-distillable
$(N-1)$-partite reduced states if and only if it is a generalized
GHZ state $\ket{\ps}=\sum_i \sqrt{p_i} \ket{i}^{\ox N} $ up to LU.
\ecr

\section{Conclusions}

We have proposed the converse monogamy of entanglement such that
when Alice and Bob are weakly entangled, then either of them is
generally strongly entangled with the third party. We believe that
the converse monogamy of entanglement is an essential quantum
mechanical feature and it promises a wide application in deciding
separability, entanglement distillation and quantum cryptography.
Our result presents two main open questions: First, can we propose a
concrete quantum scheme by the converse monogamy of entanglement?
Such a scheme will indicate a new essential difference between the
classical and quantum rules, just like that from quantum cloning
\cite{wz82} and the negative conditional entropy \cite{how05}.
Second, different from the monogamy of entanglement which relies on
the specific entanglement measures \cite{KW}, the converse monogamy
of entanglement only relies on the strength of entanglement. So can
we get a better understanding by adding other criteria on the
strength of entanglement such as the non-distillability ?

We also have shown tripartite pure states can be sorted into 21
subsets and they form an abelian monoid. It exhibits a more
canonical and clear algebraic structure of tripartite system
compared to the conventional SLOCC classification \cite{dvc00}. More
efforts from both physics and mathematics are required to understand
such structure.

\medskip

\acknowledgments

We thank Fernando Brandao for helpful discussion on the reduction
criterion and Andreas Winter for reading through the paper. The CQT
is funded by the Singapore MoE and the NRF as part of the Research
Centres of Excellence programme. MH is partially supported by a MEXT
Grant-in-Aid for Young Scientists (A) No. 20686026.

\section*{Appendix}

We prove Theorem \ref{thm:sixconditions} based on the following
preliminary lemma.
\begin{lemma}
\label{le:sep} Consider a tripartite state $\ket{\Psi_{ABC}}$ with a
separable reduced state $\rho_{BC}$. When $\rho_{AB}$ satisfies the
condition {\bf (6')}, it also satisfies the condition {\bf (1)}.
\end{lemma}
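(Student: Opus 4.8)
The plan is to reduce the claim to an equality of purities and then exploit the product structure of a separable decomposition of $\rho_{BC}$. First I would rewrite condition {\bf (6')}: since $\ket{\Psi_{ABC}}$ is pure, complementary reduced states are isospectral, so $H(\rho_{AB})=H(\rho_C)$ and $H(\rho_A)=H(\rho_{BC})$, whence {\bf (6')} is equivalent to $H(\rho_{BC})=H(\rho_C)$. Because $\rho_{BC}$ is separable it obeys the majorization criterion {\bf (5)}, so $\rho_C\succ\rho_{BC}$; combining this with $H(\rho_C)=H(\rho_{BC})$ and the strict Schur-concavity of the entropy forces $\rho_{BC}$ and $\rho_C$ to have the same spectrum, and in particular $\Tr\rho_{BC}^2=\Tr\rho_C^2$.

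Next I would fix a decomposition $\rho_{BC}=\sum_i q_i\ketbra{\beta_i}{\beta_i}\otimes\ketbra{\gamma_i}{\gamma_i}$ into pure products with all $q_i>0$, so that $\rho_C=\sum_i q_i\ketbra{\gamma_i}{\gamma_i}$. A direct expansion gives
\[
\Tr\rho_C^2-\Tr\rho_{BC}^2=\sum_{i,j}q_iq_j\,\abs{\braket{\gamma_i}{\gamma_j}}^2\bigl(1-\abs{\braket{\beta_i}{\beta_j}}^2\bigr),
\]
a sum of nonnegative terms which vanishes by the previous paragraph. Hence for every pair $i,j$ either $\ket{\beta_i}\parallel\ket{\beta_j}$ or $\braket{\gamma_i}{\gamma_j}=0$. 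Parallelism is an equivalence relation on the indices; letting $T_1,T_2,\dots$ be its classes with common directions $\ket{\beta^{(t)}}$ and $\cH_C^{(t)}:=\lin\{\ket{\gamma_i}:i\in T_t\}$, the dichotomy shows $\cH_C^{(t)}\perp\cH_C^{(t')}$ for $t\neq t'$, and $\rho_{BC}=\sum_t\ketbra{\beta^{(t)}}{\beta^{(t)}}\otimes\sigma_t$ with $\sigma_t:=\sum_{i\in T_t}q_i\ketbra{\gamma_i}{\gamma_i}$ supported on the mutually orthogonal pieces $\cH_C^{(t)}$.

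Finally I would recover $\rho_{AB}$. Since all purifications of $\rho_{BC}$ agree up to an isometry on $A$, which preserves separability of $\rho_{AB}$, I may take (absorbing phases into the $\ket{v_i}$) $\ket{\Psi_{ABC}}=\sum_i\sqrt{q_i}\ket{v_i}_A\ket{\beta^{(t(i))}}_B\ket{\gamma_i}_C=\sum_t\ket{\Phi_t}_{AC}\otimes\ket{\beta^{(t)}}_B$, where $t(i)$ is the class of $i$ and $\ket{\Phi_t}_{AC}:=\sum_{i\in T_t}\sqrt{q_i}\ket{v_i}_A\ket{\gamma_i}_C$. Tracing out $C$, the cross terms $\Tr_C\ketbra{\Phi_t}{\Phi_{t'}}$ with $t\neq t'$ vanish because $\cH_C^{(t)}\perp\cH_C^{(t')}$, so $\rho_{AB}=\sum_t\Theta_t\otimes\ketbra{\beta^{(t)}}{\beta^{(t)}}$ with $\Theta_t:=\Tr_C\ketbra{\Phi_t}{\Phi_t}\geq0$. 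This is manifestly a separable state, i.e. condition {\bf (1)}.

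The crux — and the step I expect to take the most care — is the passage from the entropy hypothesis to the pairwise alternative ``$\ket{\beta_i}\parallel\ket{\beta_j}$ or $\braket{\gamma_i}{\gamma_j}=0$''; the real content there is that {\bf (6')} together with separability of $\rho_{BC}$ forces the purity equality $\Tr\rho_{BC}^2=\Tr\rho_C^2$ (via the majorization criterion and strict Schur-concavity), after which the block decomposition of $\rho_{BC}$ and the separability of $\rho_{AB}$ are routine manipulations with orthogonal direct sums.
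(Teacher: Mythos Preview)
Your argument is correct and is a genuinely different route from the paper's. The paper proves Lemma~\ref{le:sep} by extending the separable decomposition of $\rho_{BC}$ to a classical flag system $D$, observing that condition {\bf (6')} forces equality in the data-processing inequality $D(\rho_{BC}\|I_B\otimes\rho_C)\le D(\rho_{BCD}\|I_B\otimes\rho_{CD})$, and then invoking Petz's recovery theorem to produce a channel $\Lambda_C$ whose Stinespring dilation exhibits $\rho_{AB}$ as a separable state. Your proof avoids Petz entirely: you convert {\bf (6')} into $H(\rho_{BC})=H(\rho_C)$, combine it with the Nielsen--Kempe majorization $\rho_C\succ\rho_{BC}$ and strict Schur-concavity to force the \emph{purity} equality $\Tr\rho_{BC}^2=\Tr\rho_C^2$, and then read off from the explicit expansion the dichotomy $|\beta_i\rangle\parallel|\beta_j\rangle$ or $\langle\gamma_i|\gamma_j\rangle=0$. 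The resulting block structure $\rho_{BC}=\sum_t\proj{\beta^{(t)}}\otimes\sigma_t$ with mutually orthogonal supports on $C$ makes the separability of $\rho_{AB}$ transparent.

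What each approach buys: the paper's argument is conceptual and short once one accepts Petz's theorem, and it exhibits the recovery map explicitly; your argument is entirely elementary and self-contained, needing nothing beyond majorization and a purity computation. One small point worth tightening in your write-up: the convenient purification $\sum_i\sqrt{q_i}\ket{v_i}\ket{\beta^{(t(i))}}\ket{\gamma_i}$ with orthonormal $\ket{v_i}$ may live in an ancilla $A'$ larger than $A$; the clean way to phrase it is that $\ket{\Psi_{ABC}}$ is obtained from this purification by a local co-isometry on the purifying system, and such a local map cannot create entanglement in $\rho_{AB}$. Alternatively, one can bypass the ensemble purification altogether by applying the projectors $I_{AB}\otimes P_t$ (onto $\cH_C^{(t)}$) directly to the given $\ket{\Psi_{ABC}}$; each piece then factors as $\ket{\beta^{(t)}}_B\otimes\ket{\chi_t}_{AC}$ because its $B$-marginal is pure, and the rest of your computation goes through verbatim.
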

\begin{proof}
Due to separability, $\rho_{BC}$ can be written by
$\rho_{BC}=\sum_{i}p_i |\phi_i^B,\phi_i^C\rangle \langle
\phi_i^B,\phi_i^C|$. We introduce the new system ${\cal H}_D$ with
the orthogonal basis $e_i^D$ and the tripartite extension
$\rho_{BCD} := \sum_{i}p_i |\phi_i^B,\phi_i^C,e_i^D\rangle \langle
\phi_i^B,\phi_i^C,e_i^D|$. The monotonicity of the relative entropy
$D(\r||\s):=\tr (\r \log \r - \r \log\s)$ implies that
\begin{align*}
 0&=
 H(\rho_{C})-H(\rho_{BC})
 = D(\rho_{BC}\| I_B  \ox \rho_C )
 \nonumber\\
 &\le
 D(\rho_{BCD}\| I_B \ox \rho_{CD} )
 =\sum_{i}p_i (\log p_i -\log p_i)
=0,
\end{align*}
where the first equality is from Condition {\bf (6')}. So the
equality holds in the above inequality. According to Petz's
condition \cite{Petz}, the channel $\Lambda_C: \cH_C \mapsto \cH_C
\otimes \cH_D$ with the form $\Lambda_C(\sigma):=
\rho_{CD}^{1/2}((\rho_{C}^{-1/2}\sigma \rho_{C}^{-1/2}) \otimes
I_D)\rho_{CD}^{1/2} $ satisfies $ id_{B} \ox \Lambda_C
(\rho_{BC})=\rho_{BCD} $. We introduce the system $\cH_E$ as the
environment system of $\Lambda_C$ and the isometry
$U:\cH_C\mapsto\cH_C \otimes \cH_D \otimes \cH_E $ as the
Stinespring extension of $\Lambda_C$. So the state
$|\Phi_{ABCDE}\rangle:=I_{AB} \ox U\ket{\Psi_{ABC}}$ satisfies
$\rho_{BCD}=\Tr_{AE} \proj{\Phi_{ABCDE}}$. By using an orthogonal
basis $\{e^{AE}_i\}$ on $\cH_A \otimes \cH_E$ we can write up the
state $|\Phi_{ABCDE}\rangle = \sum_{i}\sqrt{p_i}
|\phi_i^B,\phi_i^C,e_i^D,e_i^{AE}\rangle $. Then, the state
$\r_{AB}=\tr_{CDE} \proj{\Phi_{ABCDE}} = \sum_{i}p_i
|\phi_i^B\rangle \langle \phi_i^B| \otimes \tr_{E}|e_i^{AE}\rangle
\langle e_i^{AE}|$ is separable. This completes the proof.
\end{proof}
Due to Lemma \ref{le:sep} and the equivalence of conditions {\bf
(5')} and {\bf (6')}, it suffices to show that when $\rho_{BC}$ is
non-distillable and $\rho_{AB}$ satisfies {\bf (5')}, $\rho_{BC}$ is
separable. From {\bf (5')} for $\rho_{AB}$, it holds that $\rank
\rho_{BC}=d_A=\rank \rho_{AB}=d_C$. It follows from \cite[Theorem
10]{cd11} that $\r_{BC}$ has to be PPT. So $\rho_{BC}$ is separable
by Lemma \ref{le:rankAB=rankA,B}, and we have Theorem
\ref{thm:sixconditions}.


\begin{thebibliography}{10}

\bibitem{ckw00} V. Coffman, J. Kundu, and W. K. Wootters, \pra, {\bf61}, 052306 (2000).

\bibitem{sg01} V. Scarani and N. Gisin, \prl {\bf87}, 117901 (2001).

\bibitem{KW} M. Koashi and A. Winter, \pra {\bf69}, 022309 (2004).

\bibitem{TH}
T. Tsurumaru, M. Hayashi, quant-ph/1101.0064 (2011).

\bibitem{werner89} R. F. Werner, \pra {\bf 40}, 4277 (1989).

\bibitem{hhh96} M. Horodecki, P. Horodecki, and R. Horodecki,
\pla~{\bf223}, 1 (1996).

\bibitem{HHH}
M. Horodecki, P. Horodecki and R. Horodecki, \prl {\bf 80}, 5239
(1998).

\bibitem{HH}
M. Horodecki and P. Horodecki, \pra {\bf 59}, 4206 (1999).

\bibitem{Hiro}
T. Hiroshima, \prl {\bf 91}, 057902 (2003).

\bibitem{hc11} M. Hayashi and L. Chen, \pra {\bf84}, 012325 (2011).

\bibitem{thapliyal99} A. V. Thapliyal, \pra {\bf59}, 3336
(1999).

\bibitem{hw11} M. P. Hertzberg and F. Wilczek, \prl {\bf106},
050404 (2011).

\bibitem{rains99} Maximally entangled states are defined as $\sum_{ij}a_{ij} \ketbra{ii}{jj}$, see
 Eric M. Rains, \pra {\bf60}, 179 (1999).

\bibitem{dvc00} Two $N$-partite states $\ket{\ps},\ket{\ph}$ are SLOCC-equivalent when there are
invertible local operators $V_1, \cdots, V_N$ such that
$\ket{\ps}=\bigox^N_{i=1}V_i\ket{\ph}$. See W. Dur, G. Vidal, and J.
I. Cirac, \pra {\bf62}, 062314 (2000).

\bibitem{cds08} E. Chitambar, R. Duan, and Y. Shi, \prl \textbf{101}, 140502 (2008).

\bibitem{ccd10} L. Chen, E. Chitambar, R. Duan, Z. Ji, and A. Winter, \prl {\bf105}, 200501 (2010).


\bibitem{dss00} This relation is also believed to be strict.
See D. P. DiVincenzo, P. W. Shor, J. A. Smolin, B. M. Terhal, and A.
V. Thapliyal, \pra {\bf61}, 062312 (2000).


\bibitem{hlv00} P. Horodecki, M. Lewenstein, G. Vidal, and J. I. Cirac,
\pra {\bf62}, 032310 (2000).

\bibitem{cd11} L. Chen and D. Z. Djokovic, J. Phys. A. Math. Theor. {\bf44}, 285303
(2011). Theorem 10 of this paper claims that, a $M \times N$ NPT
state of rank $N$ is distillable under LOCC.

\bibitem{evw01} T. Eggeling, K. G. H. Vollbrecht, R. F. Werner,
and Michael M. Wolf, \prl {\bf87}, 257902 (2001).

\bibitem{cag99} N. J. Cerf, C. Adami, and R. M. Gingrich, \pra {\bf
60}, 898 (1999).

\bibitem{sa08} C. Sabin and G. G. Alcaine, Eur. Phys. J. D {\bf
48}, 435 (2008).

\bibitem{eb01} J. Eisert and H. J. Briegel, \pra {\bf64}, 022306 (2001).


\bibitem{dfx07} R. Duan, Yuan Feng, Yu Xin, and M. Ying,
IEEE Trans. Inform. Theory, {\bf55}, 1320 (2009).

\bibitem{peres96} A. Peres, \prl {\bf77}, 1413 (1996).

\bibitem{hhh97} M. Horodecki, P. Horodecki, and R. Horodecki,
\prl {\bf78}, 574 (1997).

\bibitem{hst99} Pawel Horodecki, John A. Smolin, Barbara M. Terhal, Ashish V.
Thapliyal, quant-ph/9910122 (1999).

\bibitem{wz82} W. K. Wootters and W. H. Zurek, Nature {\bf299},
(1982).

\bibitem{how05} M. Horodecki, J. Oppenheim, A. Winter, Nature {\bf436}, 673 (2005).

\bibitem{Petz}
D. Petz, \cmp {\bf105}, 123 (1986). Quart. J. Math. Oxford Ser. (2),
{\bf39}, 97, (1988).

\end{thebibliography}
\end{document}